%%%%%%%%%%%%%%%%%%%%%%%%%%%%%%%
\documentclass[letterpaper, 12 pt]{article}  % % Comment this line out
 
\usepackage[left=2.2cm, right=2.2cm, top=2cm, bottom=2cm]{geometry}                     

\usepackage{lipsum}

\newcommand\blfootnote[1]{%
  \begingroup
  \renewcommand\thefootnote{}\footnote{#1}%
  \addtocounter{footnote}{-1}%
  \endgroup
}

                                                          % paper

% \IEEEoverridecommandlockouts   % This command is only
                                                          % needed if you          want to
                                                          % use the \thanks         command
\usepackage{algcompatible}
\usepackage[linesnumbered,ruled,vlined]{algorithm2e}

% \overrideIEEEmargins
% See the \addtolength command later in the file to balance the column lengths
% on the last page of the document

\usepackage{balance} %even columns on last page
% \usepackage{algpseudocode}
% The following packages can be found on http:\\www.ctan.org
%\usepackage{graphics} % for pdf, bitmapped graphics files
%\usepackage{epsfig} % for postscript graphics files
%\usepackage{mathptmx} % assumes new font selection scheme installed
%\usepackage{times} % assumes new font selection scheme installed
% PJG : Comment Boxes
%-----------------------------------------------------------------------------------------------
\usepackage{xcolor,calc}
%\usepackage{parskip}

%------------------------------------------------------------------------------------------------
\usepackage{graphics}
\usepackage{epstopdf}
\usepackage{graphicx}
\usepackage{amsmath} % assumes amsmath package installed
\usepackage{amssymb}  % assumes amsmath package installed
\usepackage{amsthm}
\usepackage[noadjust]{cite}
\usepackage{color}
\usepackage{enumerate}
\usepackage{subfigure}
\usepackage{multirow}
\usepackage{mathtools}
\usepackage{booktabs}
\usepackage{epstopdf}
\usepackage[normalem]{ulem}%%%%%%%%%
\usepackage{hyperref}
\usepackage{url}
\usepackage{bm}
\usepackage{caption}

\usepackage{booktabs}

\newtheorem{remark}{Remark}
\newtheorem{proposition}{Proposition}

% \SetKwInput{KwInit}{Initialization}                % Set the Input
% \SetKwInput{KwInput}{Input}

\makeatletter
\newsavebox\myboxA
\newsavebox\myboxB
\newlength\mylenA

% for nice overline
\newcommand*\xoverline[2][0.75]{%
    \sbox{\myboxA}{$\m@th#2$}%
    \setbox\myboxB\null% Phantom box
    \ht\myboxB=\ht\myboxA%
    \dp\myboxB=\dp\myboxA%
    \wd\myboxB=#1\wd\myboxA% Scale phantom
    \sbox\myboxB{$\m@th\overline{\copy\myboxB}$}%  Overlined phantom
    \setlength\mylenA{\the\wd\myboxA}%   calc width diff
    \addtolength\mylenA{-\the\wd\myboxB}%
    \ifdim\wd\myboxB<\wd\myboxA%
       \rlap{\hskip 0.5\mylenA\usebox\myboxB}{\usebox\myboxA}%
    \else
        \hskip -0.5\mylenA\rlap{\usebox\myboxA}{\hskip 0.5\mylenA\usebox\myboxB}%
    \fi}
\makeatother

%\newcommand{\vaggelis}[1]{{\normalsize{{{\color{blue}#1}}}}}

%\newcommand*{\soutred}[1]{\sout{\textcolor{red}{#1https://www.overleaf.com/project/5cf1aba947146767593fc19d}}}

%\newcommand*{\replace}[2]{\sout{\textcolor{red}{#1}}\ \textcolor{orange}{#2}}	% \replace{old text}{new text}%
	% \replace{old text}{new text}%

% added commands
% \newcommand{\R}{\mathbb{R}}

% \newcommand{\blue}{\textcolor{blue}}
% \newcommand{\red}{\textcolor{red}}
% \newcommand*{\soutred}[1]{\textcolor{red}{\st{#1}}} 
\usepackage{tabularx}
\usepackage{booktabs}

\newtheorem{definition}{Definition}
\newtheorem{assumption}{Assumption}

%%% use this to link citations
\makeatletter
\let\NAT@parse\undefined
\makeatother

\begin{document}

\title{Learning to Satisfy Unknown Constraints in Iterative MPC}
% \title{Robust MPC for LPV Systems}

\author{Monimoy Bujarbaruah$^{\dagger}$, Charlott Vallon$^{\dagger}$, and Francesco Borrelli\blfootnote{$\dagger$ authors contributed equally to this work. Emails:\{monimoyb, charlott, fborrelli\}@berkeley.edu.}
}

\maketitle

%%%%%%%%%%%%%%%%%%%%%%%%%%%%%%%%%%%%%%%%%%%%%%%
\begin{abstract}
We propose a control design method for
linear time-invariant systems that iteratively learns to satisfy unknown polyhedral state constraints. 
At each iteration of a repetitive task, the method constructs an estimate of the unknown environment constraints using collected closed-loop trajectory data. This estimated constraint set is improved iteratively upon collection of additional data. 
An MPC controller is then designed to robustly satisfy the estimated constraint set. 
This paper presents the details of the proposed approach, and provides robust and probabilistic guarantees of constraint satisfaction as a function of the number of executed task iterations.
We demonstrate the efficacy of the proposed framework in a detailed numerical example. 
\end{abstract}

\section{Introduction}\label{sec:intro}
Data-driven decision making and control has garnered significant attention in recent times \cite{tanaskovic2017data, recht2018tour, rosolia2018data, hewing2019learning, pourbabaee2020robust}. As such approaches are increasingly being deployed in automated systems \cite{liniger2015optimization, todorovHand, FelixDrone, losey2019learning}, the satisfaction of safety requirements is of utmost importance. \emph{Safety} is often represented as containment of system states (or outputs) within a pre-defined constraint set over all possible time evolutions of the considered system. Such constraint sets define the safe \emph{environment} for the system, in which the system is allowed to evolve during execution of a control task. Various control methods exist for ensuring system safety during a control task execution \cite{GaliACC, ames2016control, borrelli2017predictive}. 

In the additional presence of uncertainty in the system model, data-driven methods have been used to quantify and bound the uncertainty in order to ensure system safety either robustly \cite{fastrack, bujarArxivAdap, kohlerNTrack, sumeetCon}, or with high probability \cite{berkenkamp2017safe, hewing2017cautious, soloperto2018learning, kollerLearning}. The majority of existing methods assume that the environment constraints are known to the control designer. If the environment constraints are unknown, data-driven methods can be used to first learn the unknown constraints \cite{armestoCL, clearnShah, chou2018learning} and then design safe controllers using one of the previous methods. However, these approaches assume a perfectly known system model, not subject to any disturbances. 
The literature on safe, data-driven controller design in the presence of uncertainties in \textit{both} the system model and the constraint set is rather limited. 
In particular, such methods typically are unable to quantify the probability of the system failing to satisfy the true environment constraints.

In this paper we propose an algorithm to design a safe controller for an uncertain system while learning polyhedral state constraints. This discussion is motivated by systems performing tasks in dynamic, unsupervised environments. For example, can a robotic manipulator playing the kendama game\footnote{\href{https://www.youtube.com/watch?v=YhDabRISH04}{https://www.youtube.com/watch?v=YhDabRISH04}} in an obstacle-free environment adapt its strategy to the unannounced addition of a nearby wall? As a first approach, we consider a linear time-invariant system with known system matrices, subject to an additive disturbance, performing an iterative task. 
The environment constraints of the task are assumed polyhedral, characterized by a set of hyperplanes, some of which are unknown to the control designer. We assume that violations of the unknown constraints can be directly measured or observed from closed-loop state trajectories.

Our algorithm iteratively constructs estimates of the unknown constraints using collected system trajectories. These estimates are then used to design a  robust MPC controller \cite{Goulart2006, bujarArxivAdap} for safely achieving the control task despite the uncertainty. 

%The probability of violating the actual (unknown) constraints within any iteration can then be upper bounded.

%, provided the estimated constraint sets satisfy certain properties, which are elaborated in Section~\ref{sec:icl}. 
%We emphasize that such verification of the estimated constraint sets is \emph{a-posteriori}, and not by design, implying that before a desired safety level is attained, the number of iterations required is not explicitly upper bounded. This is due to the fact that we opt for an iterative online design approach, instead of assuming the initial availability of a large number of trajectory samples for a safety-by-design framework \cite{calafiore2006scenario, zhang:margellos:goulart:lygeros:13}. 

The main contributions of this paper are as follows:
\begin{itemize}
    \item Given a user-specified upper bound $\epsilon$ on the probability of violating the true constraint set $\mathbb{Z}$ within any $j^\mathrm{th}$ task iteration, we construct constraint estimates $\hat{\mathbb{Z}}^j$ from previously collected closed-loop task data, using  convex hull operations (for $\epsilon=0$) or a Support Vector Machine (SVM) classifier (for $\epsilon \in (0,1)$). We then design an MPC controller to robustly satisfy $\hat{\mathbb{Z}}^j$ along the $j^\mathrm{th}$ iteration, for all possible additive disturbance values.  
    \item When $\hat{\mathbb{Z}}^j$ is formed with the SVM classification approach (for $\epsilon \in (0,1)$), we provide an explicit number of successful task iterations to obtain before the estimated set $\hat{\mathbb{Z}}^j$ is deemed safe with respect to $\epsilon$. Here, ``successful task iterations" refers to closed-loop trajectories satisfying the unknown constraints $\mathbb{Z}$.
     
    \item When $\hat{\mathbb{Z}}^j$ is formed using the convex hull approach (for~$\epsilon = 0$), we show how to design a robust MPC that provides satisfaction of the true constraints $\mathbb{Z}$ at all future iterations $k\geq j$. 
\end{itemize}
The algorithm further highlights an exploration-exploitation trade-off well known in bandits literature \cite{gupta2018active, gupta2020unified, gupta2020correlated, gupta2021best, gupta2021multi, gupta2022structured, cho2020bandit}. The remainder of the paper is organized as follows: In Section~\ref{sec:probS} we formulate the robust optimization problem to be solved in each iteration, and define the inherent system dynamics along with state and input constraints. In Section~\ref{sec:iter_mpc} the MPC optimization problem is presented along with a definition of Iteration Failure under unknown (or partially known) state constraints. Section~\ref{sec:icl} delineates the control design requirements while finding approximations of the unknown constraints and consequently presents the associated algorithms. Finally, we present detailed numerical simulations corroborating our results in Section~\ref{sec:simul}.  

%%%%%%%%%%%%%%%%%%%%%%% 
\section{Problem Setup}\label{sec:probS}
We consider linear time-invariant systems of the form:
\begin{equation}\label{eq:unc_system}
	x_{t+1} = Ax_t + Bu_t + w_t,
\end{equation}
where $x_t\in \mathbb{R}^{n}$ is the state at time $t$, $u_t\in\mathbb{R}^{m}$ is the input, and $A$ and $B$ are known system matrices. At each time step $t$, the system is affected by an independently and identically distributed (i.i.d.) \ random disturbance $w_t$ with a known polytopic support $\mathbb{W} \subset \mathbb{R}^{n}$. We define ${H}_x \in \mathbb{R}^{s \times n}$, ${h}_x \in \mathbb{R}^s$, $H_u \in \mathbb{R}^{o \times m}$, and $h_u \in \mathbb{R}^o$, and formulate the state and input constraints imposed by the task environment for all time steps $t \geq 0$ as:
\begin{align}\label{eq:constraints_nominal}
	\mathbb{Z} & := \{(x,u): {H}_x x 
	\leq {h}_x,~ {H}_u u \leq h_u \}.
\end{align}
Throughout the paper, we assume that system \eqref{eq:unc_system} performs the same task repeatedly, with each task execution referred to as an \emph{iteration}. 
Our goal is to design a controller that, at each iteration $j$, aims to solve the following finite horizon robust optimal control problem:
\begin{equation}\label{eq:generalized_InfOCP}
	\begin{array}{clll}
		\hspace{0cm} V^{j,\star}(x_S) = \\ [1ex]
		\displaystyle\min_{u_0^{j},u_1^{j}(\cdot),\ldots, u^j_{T-1}(\cdot)} & \displaystyle\sum\limits_{t=0}^{T-1} \ell \left( \bar{x}_t^{j}, u_t^{j}\left(\bar{x}_t^{j}\right) \right) \\[1ex]
		\text{s.t.,}  & x_{t+1}^{j} = Ax_t^{j} + Bu_t^{j}(x_t^{j}) + w_t^{j},\\
		& \bar{x}_{t+1}^{j} = A \bar{x}_t^{j} + Bu_t^{j}(\bar{x}_t^{j}),
		\\[1ex]
		& H_x x_t^{j} \leq h_x,~\forall w_t^{j} \in \mathbb W,\\[1ex]
		& H_u u_t^{j} \leq h_u,~\forall w_t^{j} \in \mathbb W,\\[1ex]
		&  x_0^{j} = x_S,\ t=0,1,\ldots,(T-1),
	\end{array}
\end{equation}
where $x_t^{j}$, $u_t^{j}$ and $w_t^{j}$ denote the realized system state, control input and disturbance at time $t$ of the $j^{\mathrm{th}}$ iteration respectively. The pair $(\bar{x}_t^{j}, u_t^j(\bar{x}_t^j))$ denotes the disturbance-free nominal state and corresponding nominal input. The optimal control problem \eqref{eq:generalized_InfOCP} minimizes the nominal cost over a time horizon of length $T \gg 0$ at any $j^\mathrm{th}$ iteration with $j \in \{1, 2, \dots\}$. 
The state and input constraints must be robustly satisfied for all uncertain realizations.
%Because the system \eqref{eq:unc_system} is uncertain, 
The optimal control problem \eqref{eq:generalized_InfOCP} consists of finding $[u_0^{j},u_1^{j}(\cdot),u_2^{j}(\cdot),\ldots]$, where $u_t^{j}: \mathbb{R}^{n}\ni x_t^{j} \mapsto u_t^{j} = u_t^{j}(x_t^{j})\in\mathbb{R}^{m}$ are state feedback policies. 

In this work we consider constraints of the form:
\begin{align}\nonumber
    H_x & = \begin{bmatrix} H^\mathrm{b}_x \\ H^\mathrm{ub}_x\end{bmatrix}, 
    h_x = \begin{bmatrix} h^\mathrm{b}_x \\ h^\mathrm{ub}_x\end{bmatrix},
\end{align}
where the superscripts $\{\mathrm{b, ub}\}$ denote the known and unknown parts of the constraints, respectively. 
That is to say, we consider a scenario in which we only know a subset of the system's environment constraint set.
At the beginning of the $j^\mathrm{th}$ task iteration we construct approximations of $H_x$ and $h_x$, denoted as $\hat{H}^j_x$ and $\hat{h}^j_x$, respectively, using closed-loop trajectories of the system from previous task iterations. The estimated constraints form a safe set estimate $\hat{\mathbb{Z}}^j$:
\begin{align}\label{eq:estim_con}
    \hat{\mathbb{Z}}^j := \{(x, u) : \hat{H}^j_x x \leq \hat{h}^j_x, H_u u \leq h_u\}.
\end{align}
These estimates are refined iteratively using new data as the system continues to perform the task, and are used to solve an estimate of \eqref{eq:generalized_InfOCP}. 
The construction of the safe set estimates is detailed in Section~\ref{sec:icl}.

\section{Iterative MPC Problem}\label{sec:iter_mpc}
For computational tractability when considering task duration $T \gg 0$, we try to approximate a solution to the optimal control problem \eqref{eq:generalized_InfOCP} by solving a simpler constrained optimal control problem with prediction horizon $N \ll T$ in a receding horizon fashion. 
\subsection{Problem Definition}
Since  the true constraint set $\mathbb{Z}$ is not completely known, we use our estimate $\hat{\mathbb{Z}}^j$ built from data and formulate the robust optimal control problem as: 
%Depending on how $\hat{\mathbb{Z}}^j$ is formed, robust satisfaction of the true constraints \eqref{eq:constraints_nominal} during an iteration is, in general, not guaranteed. We nonetheless attempt to design a robust MPC controller for the $j^\mathrm{th}$ task iteration according to our best current estimate $\hat{\mathbb{Z}}^j$ of the true constraints $\mathbb{Z}$. We formulate this robust optimal control problem as:
\begin{equation} \label{eq:MPC_R_fin}
	\begin{aligned}
	  V_{t \rightarrow t+N}&^{\mathrm{MPC},j}(x^j_t, \hat{\mathbb{Z}}^j,\hat{\mathcal{X}}^j_N)  :=	\\
	& \min_{U^j_t(\cdot)} ~~ \sum_{k=t}^{t+N-1} \ell(\bar{x}^j_{k|t}, v^j_{k|t}) + Q(\bar{x}^j_{t+N|t})\\
		& ~~~\text{s.t.,}~~~    x^j_{k+1|t} = Ax^j_{k|t} + Bu^j_{k|t} + w^j_{k|t},\\
		& ~~~~~~~~~~~\bar{x}^j_{k+1|t} = A\bar{x}^j_{k|t} + Bv^j_{k|t},\\
		&~~~~~~~~~~~u^j_{k|t} = \sum \limits_{l=t}^{k-1}M^j_{k,l|t} w^j_{l|t}  + v^j_{k|t},\\
		&~~~~~~~~~~~ \hat{H}^j_x x^j_{k|t} \leq \hat{h}^j_x,\\
		&~~~~~~~~~~~ H_u u^j_{k|t} \leq h_u,\\
		&~~~~~~~~~~~x^j_{t|t} = \bar{x}^j_{t|t},\\
	    &~~~~~~~~~~~ {x}^j_{t+N|t} \in \hat{\mathcal{X}}_N^j,\\
	    & ~~~~~~~~~~~ \forall w^j_{k|t} \in {\mathbb{W}},\forall k = \{t,\ldots,t+N-1\},
	\end{aligned}
\end{equation}
where in the $j^\mathrm{th}$ iteration, $x^j_t$ is the measured state at time $t$, $x^j_{k|t}$ is the predicted state at time $k$, obtained by applying predicted input policies $[u^j_{t|t},\dots,u^j_{k-1|t}]$ to system~\eqref{eq:unc_system}.
We denote the disturbance-free nominal state and corresponding input as $\{\bar{x}^j_{k|t}, v^j_{k|t}\}$ with $v^j_{k|t} = u^j_{k|t}(\bar{x}^j_{k|t})$. The MPC controller minimizes the cost over the predicted nominal trajectory $\Big \{ \{\bar{x}^j_{k|t}, v^j_{k|t}\}_{k=t}^{t+N-1}, \bar{x}^j_{t+N|t} \Big \}$, which is comprised of a positive definite stage cost $\ell(\cdot, \cdot)$ and terminal cost $Q(\cdot)$.  
% Moreover, $\mathbf{s}_t^j$ are slack variables used to soften state constraints, and the violations of constraints are penalized with weight $P \gg 0$. The slack variables ensure that problem \eqref{eq:MPC_R_fin} is feasible at all times along an iteration, since we wish to gather trajectory data despite the potential event of \emph{failure}. 
We note that the above formulation uses affine disturbance feedback parameterization \cite{Goulart2006} of input policies.  
We use state feedback $u^j_t = Kx^j_t$ with $(A+BK)$ being stable to construct a terminal set $\hat{\mathcal{X}}^j_N = \{x \in \mathbb{R}^n: \hat{Y}^j x \leq \hat{z}^j,~\hat{Y}^j \in \mathbb{R}^{r^j \times n},~\hat{z}^j \in \mathbb{R}^{r^j}\}$, which is the $(T-N)$ step robust reachable set \cite[Chapter~10]{borrelli2017predictive} to the set of state constraints in \eqref{eq:estim_con}. 
%obtained with a chosen state feedback controller $u=Kx$, dynamics \eqref{eq:unc_system} and constraints \eqref{eq:estim_con}. 
Specifically, this set has the properties:
\begin{equation}\label{eq:term_set_DF}
    \begin{aligned}
    &\hat{\mathcal{X}}^j_N \subseteq \{x~|~(x,Kx) \in \hat{\mathbb{Z}}^j\},\\
    &\hat{H}^j_x((A+BK)^ix + \sum \limits_{\tilde{i}=0}^{i-1} (A+BK)^{i-\tilde{i}-1}w_{\tilde{i}}) \leq \hat{h}^j_x,\\
    &H_u(K ( (A+BK)^ix + \sum \limits_{\tilde{i}=0}^{i-1} (A+BK)^{i-\tilde{i}-1}w_{\tilde{i}} )) \leq h_u,\\
    &\forall x\in \hat{\mathcal{X}}^j_N,~\forall w_{i} \in {\mathbb{W}},~\forall i=1,2,\dots,(T-N).
    \end{aligned}
\end{equation}
After solving \eqref{eq:MPC_R_fin} at time step $t$ of the $j^\mathrm{th}$ iteration, we apply
\begin{equation}\label{eq:inputCL_DF}
	u^j_t = v^{j,\star}_{t|t}
\end{equation}
to system \eqref{eq:unc_system}. 
We then resolve the problem \eqref{eq:MPC_R_fin} again at the next $(t+1)$-th time step, yielding a receding horizon strategy. 

We note that computing the set $\hat{\mathcal{X}}^j_N$ in \eqref{eq:term_set_DF} at each iteration can be computationally  expensive. In such cases one can opt for data-driven methods such as \cite{rosolia2017learning, kimPstuff} or simple approximation methods such as \cite{girard2007approximation, kurzhanski2000ellipsoidal} to construct these terminal sets.

\begin{assumption}[Well-Posedness of Task]\label{assump:well_posed}
We assume that given an initial task state $x_S$, the optimization problem \eqref{eq:MPC_R_fin} is  feasible at all times $0 \leq t \leq T-1$ for the true constraint set $\hat{\mathbb{Z}}^j = \mathbb{Z}$ as defined in \eqref{eq:constraints_nominal}, for all iterations $j \in \{1,2,\dots\}$. We further assume that $\mathbf{0}_{n \times 1} \in \mathbb{Z}$. 
\end{assumption}

% \textcolor{blue}{distinguish between a feasible solution to \eqref{eq:MPC_R_fin} and actually an under approximation to constraints \eqref{eq:constraints_nominal}. }

\subsection{Successful Task Iterations}
At each iteration, the true constraint set $\mathbb{Z}$ is unknown and being estimated with $\hat{\mathbb{Z}}^j$ built from data. Depending on how $\hat{\mathbb{Z}}^j$ is constructed, robust satisfaction of the true constraints \eqref{eq:constraints_nominal} during an iteration may not be guaranteed. It is thus possible that \eqref{eq:constraints_nominal} becomes infeasible at some point while solving \eqref{eq:MPC_R_fin} during $0 \leq t \leq T-1$ along any $j^\mathrm{th}$ iteration. We formalize this with the following definition:
% \begin{definition}[State Constraint Failure]\label{def:stateConstraintFailure} State Constraint Failure at time step $t$ of task iteration $j$ is the event 
% \begin{align}\label{eq:scf}
%   \mathrm{[SCF]}^j_t:~ H_x x^j_t > h_x.
% \end{align}
% That is, a State Constraint Failure implies the violation of imposed constraints \eqref{eq:constraints_nominal} by system \eqref{eq:unc_system} in closed-loop with the MPC controller \eqref{eq:inputCL_DF}.
% \end{definition}
% \begin{remark}\label{rem:failure_time}
% Let $T^j < T$ denote the time step in the $j^\mathrm{th}$ iteration when a State Constraint Failure occurs. Such an event can be detected by an abnormally large discrepancy between predicted and realized states along the system trajectory while solving \eqref{eq:MPC_R_fin}. We then stop the $j^\mathrm{th}$ iteration and update $\hat{\mathbb{Z}}^{j}  \stackrel{\mathrm{update}}{\longrightarrow} \hat{\mathbb{Z}}^{j+1}$. When $T^j = T$, it denotes a successful iteration without any State Constraint Failure.  
% \end{remark}
\begin{definition}[Successful Iteration]\label{def:succ_iter}
A Successful $j^\mathrm{th}$ Iteration is defined as the event  
\begin{align}\label{eq:iter_failure}
    [\mathrm{SI}]^j : H_x x_t^j \leq h_x,~\forall t \in [0,T].
\end{align}
That is, an iteration is successful if there are no state constraint violations during $0 \leq t \leq T$. Otherwise, the iteration is deemed failed; that is, an Iteration Failure event is implicitly defined as $[\mathrm{IF}]^j = ([\mathrm{SI}]^j )^\textnormal{c}$, where $([\cdot])^\textnormal{c}$ denotes the complement of an event. 
\end{definition}
The probability of a Successful Iteration $[\mathrm{SI}]^j$ is a function of the sets $\hat{\mathbb{Z}}^j$ since $x_t^j$ is the closed-loop trajectory obtained when applying the feedback controller
\eqref{eq:MPC_R_fin}-\eqref{eq:inputCL_DF}.

\subsection{Control Design Objectives}
Our aim is not only to keep the probability of $[\mathrm{IF}]^j$ low along each iteration, but also to maintain satisfactory controller performance in terms of cost during successful iterations. Let the closed-loop cost of a successful iteration $j$ under observed disturbance samples $w^{j}$ be denoted by
\begin{align*}
    \hat{\mathcal{V}}^j(x_S, w^{j}) = \sum \limits_{t=0}^{T-1} \ell({x}^j_t,v^{j,\star}_{t|t}),
\end{align*}
where notation $w^j$ denotes $[w^j_0, w^j_1, \dots, w^j_{T-1}]$. We use the average closed-loop cost $\mathbb{E}[\hat{\mathcal{V}}^j(x_S, w^{j})]$ to quantify controller performance. Specifically, our goal is to lower the iteration \emph{performance loss}, defined as
\begin{align} \label{eq:cl_loop_diff}
    [\mathrm{PL}]^j = \mathbb{E} [\hat{\mathcal{V}}^j(x_S, w^{j})] - \mathbb{E} [{\mathcal{V}}^{\star}(x_S, w^{j})],
\end{align}
where $\mathbb{E} [{\mathcal{V}}^{\star}(x_S, w^{j})]$ denotes the average closed-loop cost of an iteration if $\mathbb{Z}$ had been known, i.e. if $\hat{\mathbb{Z}}^j = \mathbb{Z}$ for all $j \in \{1,2,\dots\}$. To formalize this joint focus on obtaining a low probability of Iteration Failures while maintaining satisfactory controller performance, we summarize our control design objectives as:
\begin{enumerate}[(C1)]
    \item Design a closed-loop MPC control law \eqref{eq:inputCL_DF} which ensures that the system \eqref{eq:unc_system} maintains a user-specified upper bound on the probability of Iteration Failure $[\mathrm{IF}]^j$ (\ref{eq:iter_failure}), \emph{for all} iterations $j \in \{1,2,\dots\}$.
    
    \item Minimize $[\mathrm{PL}]^j$ (as defined in \eqref{eq:cl_loop_diff}) at \emph{each} iteration $j \in \{1,2,\dots\}$, while satisfying (C1).
\end{enumerate}
However, as we start the control task from scratch without assuming the initial availability of a large number of trajectory data samples, and it is difficult in general to obtain statistical properties of estimated constraint sets $\hat{\mathbb{Z}}^j$, methods such as \cite{calafiore2006scenario, zhang:margellos:goulart:lygeros:13, bujarbaruah2019learning } cannot be used to satisfy (C1)-(C2) directly. We therefore relax the above two specifications and formulate two control design specifications (D1) and (D2) in the next section. 
%%%%%%%%%%%%%%%%%%%%%%%%%%%%%%%%%

\section{Iterative Constraint Learning}\label{sec:icl}
We consider the following design specifications: 
\begin{enumerate}[(D1)]
    \item Design a closed-loop MPC control law \eqref{eq:inputCL_DF} which ensures that the system \eqref{eq:unc_system} maintains a user-specified upper bound $\epsilon$ on the probability of Iteration Failure, \emph{after} some iteration $j \in \{1,2,\dots\}$.
    
    \item Minimize $[\mathrm{PL}]^j$ (as defined in \eqref{eq:cl_loop_diff}) \emph{after} some iteration $j \in \{1,2,\dots\}$.
    %(NO NEED OF THIS>>> LATER YOU SAY YOU WANT D! ANS D2 while guaranteeing satisfaction of (D1).
\end{enumerate}
 We wish to find the smallest index $\bar{j}$, such that (D1) and (D2) are satisfied for all $j \geq \bar{j}$. 
The design specifications (D1)-(D2) indicate that the approach to construct estimated state constraint sets proposed in this paper, is our best possible attempt to satisfy (C1)-(C2), given the information available at each iteration $j$.

\begin{assumption}[Feasibility Classification]\label{ass:feasflag}
Given a system state trajectory, we assume that a classifier is available to check the feasibility of each point in the trajectory based on whether it satisfies the true state constraints in \eqref{eq:constraints_nominal}. This classifier returns a corresponding sequence of feasibility flags. 
\end{assumption}

\begin{assumption}[Simulator] \label{ass:sim}
We assume that each iteration is run until completion at time $T$, and that state constraint satisfaction as described in Assumption~\ref{ass:feasflag} is checked only at the end of the simulation. 
\end{assumption}

We note that Assumption~\ref{ass:sim} could be relaxed in several ways. 
For example, constraint satisfaction could be checked in real-time and the simulations stopped if violations occur. 
One could also run physical experiments and check the feasibility of \eqref{eq:constraints_nominal} in real-time, by observing if the physical experiment fails.
Some constraint violations may be hard to evaluate during physical experiments, but this discussion goes beyond the scope of this paper.

%%%%%%%%%%%%%%%%
\subsection{Constructing Constraint Estimates $\hat{\mathbb{Z}}^j$}\label{ssec:hatZ_constr}
We show how the estimated constraint sets $\hat{\mathbb{Z}}^j$ are constructed in order to satisfy the design specifications (D1) and (D2). This process depends on the user-specified upper bound $\epsilon$ on the probability of Iteration Failure. To satisfy (D1) we search for the smallest $\bar{j}$, such that
\begin{align}\label{eq:math_def_IF}
    \mathbb{P}([\mathrm{IF}]^j) \leq \epsilon,
\end{align}
for all $j \geq \bar{j}$, where $\epsilon \in (0,1)$ is the bound on the probability of Iteration Failure. At the start of the first iteration, $j=1$, we use only the known information about the imposed constraints:
\begin{align}\label{eq:zhat1}
    \hat{\mathbb{Z}}^1 := \{(x, u) : {H}^\mathrm{b}_x x \leq {h}^\mathrm{b}_x, H_u u \leq h_u\}.
\end{align}
Next, consider any $j \in \{1,2,\dots\}$. Let the closed-loop realized states collected until the end of the $j^\mathrm{th}$ iteration be
\begin{align} \label{eq:traj_jm1}
    \mathbf{x}^{1 : j} = [x^{1 : j}_0, x^{1 : j}_1, \dots, x^{1 : j}_T],
\end{align}
where $x^{1:j}_i \in \mathbb{R}^{n \times j}$ is a matrix containing all states corresponding to time step $i$ from the first $j$ iterations.
Let $f^j(x) : \mathbb{R}^n \mapsto \mathbb{R}$ denote a curve that separates the points in \eqref{eq:traj_jm1} according to
whether they satisfy all true state constraints in \eqref{eq:constraints_nominal}, such that $f^j(\mathbf{0}_{n \times 1}) \leq 0$. 
Based on Assumption~\ref{ass:feasflag}, such a binary classification curve can be obtained with supervised learning techniques. In this paper we use a kernelized Support Vector Machine algorithm \cite[Chapter~12]{friedman2001elements}.

Let a polyhedral inner approximation\footnote{Approximation techniques are elaborated in Section~\ref{sec:simul}.} of the intersection of $f^j(x) \leq 0$ and the known state constraints in $\hat{\mathbb{Z}}^1$ be given by:
\begin{align}\label{eq:pol_svm}
\hat{\mathcal{P}}^{j+1}_\mathrm{svm} &= \{x: \hat{H}^{j+1}_{x,\mathrm{svm}} x \leq \hat{h}^{j+1}_{x,\mathrm{svm}} \} \\
    & =  \{x:f^j(x) \leq 0\} \cap \{x: H^\mathrm{b}_x x \leq h^\mathrm{b}_x\}. \nonumber
\end{align}
We then use \eqref{eq:pol_svm} to form the constraint set estimates for the following iteration:
\begin{align}\label{eq:zhat_svm}
      \hat{\mathbb{Z}}_\mathrm{svm}^{j+1} := \{(x, u) : \hat{H}^{j+1}_{x, \mathrm{svm}} x \leq \hat{h}^{j+1}_{x, \mathrm{svm}}, H_u u \leq h_u\},
\end{align}
setting $\hat{\mathbb{Z}}^{j+1} = \hat{\mathbb{Z}}_\mathrm{svm}^{j+1}$ in our robust optimization problem \eqref{eq:MPC_R_fin} for $j \in \{1,2,\dots\}$.
In other words, at each iteration $j>1$, the estimated state constraints in $\hat{\mathbb{Z}}^{j}$ are formed out of the SVM classification boundary learned from all previous state trajectories, intersected with the known state constraints.

\begin{remark}\label{rem:scale_svmset}
In case the set $\hat{\mathbb{Z}}_\mathrm{svm}^{j+1}$ in \eqref{eq:zhat_svm} yields either infeasibility of \eqref{eq:MPC_R_fin} or an empty terminal set $\hat{\mathcal{X}}^{j+1}_N$ for any iteration $j \in \{1,2,\dots\}$, the set of estimated state constraints can be scaled appropriately until feasibility of \eqref{eq:MPC_R_fin} is obtained. Such scaling is not further analyzed in the remaining sections of this paper. 
\end{remark}

Since the estimated constraint sets \eqref{eq:zhat_svm} are not necessarily inner approximations of the true unknown constraints \eqref{eq:constraints_nominal}, the closed-loop state trajectories in future iterations may result in Iteration Failures with a nonzero probability. In the following proposition we quantify the probability of an Iteration Failure, given a $\hat{\mathbb{Z}}^{\bar{j}}$, for some $\bar{j} \in \{1,2,\dots\}$.
%One can attempt to avoid such failures with an alternate formulation of $\hat{\mathbb{Z}}^j$ as described subsequently in Section~\ref{ssec:cvx_sec}. This alternate formulation allows for strict satisfaction of design requirement (D1), potentially at the cost of requirement (D2) due to conservative approximations of estimated constraints. 

\begin{proposition}\label{prop:prob_con}
Consider $\hat{\mathbb{Z}}_\mathrm{svm}^1 = \hat{\mathbb{Z}}^1$ from \eqref{eq:zhat1} for $\bar{j} = 1$  or a constraint estimate set $\hat{\mathbb{Z}}_\mathrm{svm}^{\bar{j}}$ from \eqref{eq:zhat_svm} formed using trajectories up to iteration $\bar{j}-1$ for $\bar{j} >1$. 
Let this set $\hat{\mathbb{Z}}_\mathrm{svm}^{\bar{j}}$ be used as the constraint estimate set for the next $N_\mathrm{it}$ task iterations, beginning with iteration $\bar{j}$.
If for a chosen $\epsilon \in (0,1)$ and $0< \beta \ll 1$, Successful Iterations are obtained for the next $N_\mathrm{it} \geq \frac{\ln 1/\beta}{\ln 1/(1-\epsilon) }$ iterations, then 
$\mathbb{P}([\mathrm{IF}]^{j}) \leq \epsilon$ with confidence at least $1-\beta$ \emph{for all} subsequent task iterations $j\geq \bar{j}$ using $\hat{\mathbb{Z}}^{{j}}= \hat{\mathbb{Z}}_\mathrm{svm}^{\bar{j}}$. 
\end{proposition}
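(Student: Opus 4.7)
The plan is to recast the proposition as a classical Bernoulli concentration argument. Once the constraint estimate $\hat{\mathbb{Z}}^{\bar{j}}_{\mathrm{svm}}$ is frozen and used to solve \eqref{eq:MPC_R_fin} for the subsequent iterations, each iteration is an independent experiment: the initial condition $x_S$, the dynamics matrices $(A,B)$, the feedback law \eqref{eq:inputCL_DF}, and the estimated set $\hat{\mathbb{Z}}^{\bar{j}}_{\mathrm{svm}}$ are all fixed, and the only source of stochasticity is the i.i.d.\ disturbance sequence $w^j$ drawn afresh each iteration from $\mathbb{W}$. Hence the indicator random variables $\mathbf{1}_{[\mathrm{IF}]^j}$, for $j=\bar{j},\bar{j}+1,\dots$ (as long as $\hat{\mathbb{Z}}^j$ remains equal to $\hat{\mathbb{Z}}^{\bar{j}}_{\mathrm{svm}}$), are i.i.d.\ Bernoulli with a common (unknown) parameter $p := \mathbb{P}([\mathrm{IF}]^j)$.

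First, I would make this independence statement precise as the initial step of the proof, as it is really the only ``model-dependent'' ingredient; everything after is pure probability. I would then frame the claim as a one-sided confidence bound on the Bernoulli parameter $p$: given $N_{\mathrm{it}}$ observed successes in a row, we want to conclude $p\le \epsilon$ with confidence $1-\beta$.

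Next, I would run a standard contrapositive argument. Suppose $p>\epsilon$. Then the probability that $N_{\mathrm{it}}$ consecutive iterations all turn out to be Successful Iterations is
\begin{equation*}
(1-p)^{N_{\mathrm{it}}} < (1-\epsilon)^{N_{\mathrm{it}}}.
\end{equation*}
Requiring this quantity to be at most $\beta$ and solving gives
\begin{equation*}
(1-\epsilon)^{N_{\mathrm{it}}} \le \beta
\quad\Longleftrightarrow\quad
N_{\mathrm{it}} \ge \frac{\ln(1/\beta)}{\ln(1/(1-\epsilon))},
\end{equation*}
which is exactly the hypothesis. Therefore, the event ``observe $N_{\mathrm{it}}$ consecutive Successful Iterations while $p>\epsilon$'' has probability at most $\beta$. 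Equivalently, conditioned on having observed those $N_{\mathrm{it}}$ successes, we conclude $\mathbb{P}([\mathrm{IF}]^j)\le\epsilon$ with confidence at least $1-\beta$, and this bound then holds for every subsequent iteration $j\ge \bar{j}$ in which the constraint estimate is left unchanged at $\hat{\mathbb{Z}}^{\bar{j}}_{\mathrm{svm}}$.

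The main obstacle is really conceptual rather than technical: justifying that the successive iterations under a frozen $\hat{\mathbb{Z}}^{\bar{j}}_{\mathrm{svm}}$ are genuinely i.i.d.\ Bernoulli trials with a common failure probability $p$. This relies on (i) the i.i.d.\ assumption on $w_t$ stated below \eqref{eq:unc_system}, (ii) the determinism of the MPC law \eqref{eq:MPC_R_fin}--\eqref{eq:inputCL_DF} once $\hat{\mathbb{Z}}^{j}$ is fixed, and (iii) the fact that, per the proposition's hypothesis, the estimate is not updated during these $N_{\mathrm{it}}$ iterations (otherwise the trials would no longer be identically distributed). I would state these three points explicitly up front; the remaining calculation is the one-line contrapositive above.
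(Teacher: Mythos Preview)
Your argument is correct and, at its core, coincides with the paper's: both reduce to the single inequality $(1-\epsilon)^{N_{\mathrm{it}}}\le\beta$, i.e.\ $N_{\mathrm{it}}\ge \ln(1/\beta)/\ln(1/(1-\epsilon))$. The difference is purely in packaging. The paper does not argue directly with Bernoulli indicators; instead it introduces the scalar constraint-violation function
\[
Q(\mathbf{w}) \;=\; \max_i\bigl\{[\mathbf{H}_x]_i\,\mathbf{x}(\mathbf{w})-[\mathbf{h}_x]_i\bigr\},
\]
so that $[\mathrm{IF}]$ is the event $\{Q(\mathbf{w})>0\}$, and then invokes a known sample-complexity bound for the empirical maximum (Theorem~3.1 of Tempo--Bai--Dabbene): with $N_{\mathrm{it}}$ i.i.d.\ draws and $\hat Q_{N_{\mathrm{it}}}=\max_j Q(\mathbf{w}^j)$, one has $\mathbb{P}^{N_{\mathrm{it}}}\bigl[\mathbb{P}[Q(\mathbf{w})>\hat Q_{N_{\mathrm{it}}}]\le\epsilon\bigr]\ge 1-\beta$ whenever $N_{\mathrm{it}}\ge \ln(1/\beta)/\ln(1/(1-\epsilon))$. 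The proposition then follows by specializing to the observed event $\hat Q_{N_{\mathrm{it}}}=0$ (all $N_{\mathrm{it}}$ iterations successful). Your contrapositive Bernoulli computation is exactly the proof of that cited theorem in this degenerate case, so what you gain is self-containment and transparency; what the paper's route buys is a pointer to a general randomized-algorithms result that would also cover non-binary performance levels. Your explicit justification of the i.i.d.\ structure (fixed $x_S$, deterministic MPC law, frozen $\hat{\mathbb{Z}}^{\bar j}_{\mathrm{svm}}$, i.i.d.\ $w^j$) is a useful addition that the paper leaves implicit.
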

\begin{proof}
See Appendix.
\end{proof}
Proposition~\ref{prop:prob_con} requires that the polytope $\hat{\mathcal{P}}^{j+1}_\mathrm{svm}$ for $j \in \{1,2,\dots\}$ is updated \emph{only} if new violation points for constraints \eqref{eq:constraints_nominal} are seen at the end of an iteration $j$. This update strategy is highlighted in Algorithm~\ref{alg1}. 
% Otherwise at the\ end of iteration $j$, problem \eqref{eq:MPC_R_fin} is re-solved with $\hat{\mathbb{Z}}_\mathrm{svm}^j$ from \eqref{eq:zhat_svm} constructed at the beginning of iteration $j$. 
 If no violations are seen for $N_\mathrm{it}$ successive iterations, a probabilistic safety certificate is provided %as 
% per Proposition~\ref{prop:prob_con}, 
and Algorithm~\ref{alg1} is terminated. 

\subsection{Safety vs Performance Trade-Off}\label{ssec:cvx_sec}
Proposition~\ref{prop:prob_con} proves that constructing estimated constraint sets as per \eqref{eq:zhat_svm}, can result in satisfaction of \eqref{eq:math_def_IF} for some $\epsilon \in (0,1)$. 
However, for certain applications, violations of constraints \eqref{eq:constraints_nominal} may be too expensive to allow for a nonzero probability of failure, and we instead require $\epsilon = 0$.
In such cases, we can utilize the closed-loop system trajectories for obtaining \textit{guaranteed} inner approximations of \eqref{eq:constraints_nominal}, so that 
%requirement \eqref{eq:math_def_IF} becomes finding the smallest $\bar{j}$ such that 
$\mathbb{P}([\mathrm{IF}]^j) = 0$ for all future iterations $j \geq \bar{j}$, for some $\bar{j}$ to be determined.

Recalling \eqref{eq:traj_jm1}, let the closed-loop realized states collected until the end of the $j^
\mathrm{th}$ iteration be denoted as
\begin{align}\label{eq:traj_clcvx}
    \mathbf{x}^{1:j} = [x^{1:j}_0, x^{1:j}_1, \dots, x^{1:j}_T],
\end{align}
and let $\hat{\mathbf{x}}^{j}$ 
denote the collection of states from \eqref{eq:traj_clcvx} which satisfy all true state constraints in \eqref{eq:constraints_nominal}. Then an inner approximation the of state constraints in  \eqref{eq:constraints_nominal} is provided by the polyhedron:
\begin{align}\label{eq:conv_hull_pol}
\hat{\mathcal{P}}_\mathrm{cvx}^{j+1} &= \{x: \hat{H}^{j+1}_{x,\mathrm{cvx}} x \leq \hat{h}^{j+1}_{x,\mathrm{cvx}}\} \\
& = \mathrm{conv}([\mathbf{0}_{n \times 1}, \hat{\mathbf{x}}^j]) \nonumber,
\end{align}
where $\mathrm{conv}(\cdot)$ denotes the convex hull operator. We can now define
\begin{align}\label{eq:z_update_conv}
     \hat{\mathbb{Z}}_\mathrm{cvx}^{j+1} := \{(x, u) : \hat{H}^{j+1}_{x,\mathrm{cvx}} x \leq \hat{h}^{j+1}_{x,\mathrm{cvx}}, H_u u \leq h_u\},
\end{align}
and use $\hat{\mathbb{Z}}^j = \hat{\mathbb{Z}}_\mathrm{cvx}^j$ for $j \in \{2,3,\dots\}$ in \eqref{eq:MPC_R_fin} as a robust alternative to \eqref{eq:zhat_svm}. 

\begin{proposition}\label{prop:zhat_cvx}
If $\hat{\mathbb{Z}}_\mathrm{cvx}^{\bar{j}}$ \eqref{eq:z_update_conv} yields feasibility of \eqref{eq:MPC_R_fin} for some $\bar{j} \in \{2,3,\dots\}$, then
\begin{align*}
   \hat{\mathbb{Z}}_\mathrm{cvx}^{\bar{j}} \subseteq \mathbb{Z},
\end{align*}
and $\hat{\mathbb{Z}}_\mathrm{cvx}^{{j}} = \hat{\mathbb{Z}}_\mathrm{cvx}^{\bar{j}}$ for all $j \geq \bar{j}$. 
\end{proposition}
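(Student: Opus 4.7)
The plan is to establish the two claims in sequence, with the containment $\hat{\mathbb{Z}}_\mathrm{cvx}^{\bar{j}} \subseteq \mathbb{Z}$ supplying the key ingredient for the subsequent invariance claim.

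First I would prove the containment. The set $\hat{\mathcal{P}}_\mathrm{cvx}^{\bar{j}}$ defined in \eqref{eq:conv_hull_pol} is, by construction, the convex hull of the origin together with the collected states $\hat{\mathbf{x}}^{\bar{j}-1}$ that satisfy the true state constraints $H_x x \leq h_x$. The set $\{x : H_x x \leq h_x\}$ is a polyhedron, hence convex, and it contains the origin by Assumption~\ref{assump:well_posed}, and it contains every column of $\hat{\mathbf{x}}^{\bar{j}-1}$ by the very definition of $\hat{\mathbf{x}}^{\bar{j}-1}$. The convex hull of any subset of a convex set is contained in that convex set, which gives $\hat{\mathcal{P}}_\mathrm{cvx}^{\bar{j}} \subseteq \{x : H_x x \leq h_x\}$. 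Since $\hat{\mathbb{Z}}_\mathrm{cvx}^{\bar{j}}$ uses the true input constraints $H_u u \leq h_u$, the containment $\hat{\mathbb{Z}}_\mathrm{cvx}^{\bar{j}} \subseteq \mathbb{Z}$ follows.

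Next I would show that no new vertices enter the convex hull in iteration $\bar{j}$, so $\hat{\mathcal{P}}_\mathrm{cvx}^{\bar{j}+1} = \hat{\mathcal{P}}_\mathrm{cvx}^{\bar{j}}$. By assumption, \eqref{eq:MPC_R_fin} is feasible at iteration $\bar{j}$ with $\hat{\mathbb{Z}}^{\bar{j}} = \hat{\mathbb{Z}}_\mathrm{cvx}^{\bar{j}}$. The robust formulation of \eqref{eq:MPC_R_fin} together with the terminal-set property \eqref{eq:term_set_DF} guarantees that the closed-loop states $x^{\bar{j}}_t$ generated by applying \eqref{eq:inputCL_DF} satisfy $\hat{H}^{\bar{j}}_{x,\mathrm{cvx}} x^{\bar{j}}_t \leq \hat{h}^{\bar{j}}_{x,\mathrm{cvx}}$ for all $t \in [0,T]$ and all admissible disturbance realizations, i.e., every $x^{\bar{j}}_t \in \hat{\mathcal{P}}_\mathrm{cvx}^{\bar{j}}$. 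Combining this with the just-proved containment yields $x^{\bar{j}}_t \in \mathbb{Z}_x := \{x : H_x x \leq h_x\}$, so every state collected in iteration $\bar{j}$ is added to $\hat{\mathbf{x}}^{\bar{j}}$. Consequently
\begin{align*}
\hat{\mathcal{P}}_\mathrm{cvx}^{\bar{j}+1} = \mathrm{conv}\!\left([\mathbf{0}_{n \times 1},\, \hat{\mathbf{x}}^{\bar{j}}]\right) = \mathrm{conv}\!\left([\mathbf{0}_{n \times 1},\, \hat{\mathbf{x}}^{\bar{j}-1},\, x^{\bar{j}}_0,\dots,x^{\bar{j}}_T]\right),
\end{align*}
but each $x^{\bar{j}}_t$ already lies in $\hat{\mathcal{P}}_\mathrm{cvx}^{\bar{j}}$, so adjoining these points to the generating set of a convex hull that already contains them leaves the hull unchanged. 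Hence $\hat{\mathcal{P}}_\mathrm{cvx}^{\bar{j}+1} = \hat{\mathcal{P}}_\mathrm{cvx}^{\bar{j}}$ and therefore $\hat{\mathbb{Z}}_\mathrm{cvx}^{\bar{j}+1} = \hat{\mathbb{Z}}_\mathrm{cvx}^{\bar{j}}$.

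The final step is a straightforward induction on $j$. Since $\hat{\mathbb{Z}}_\mathrm{cvx}^{\bar{j}+1} = \hat{\mathbb{Z}}_\mathrm{cvx}^{\bar{j}}$ and the MPC problem depends on the iteration only through this constraint estimate and the common initial state $x_S$, \eqref{eq:MPC_R_fin} is again feasible at iteration $\bar{j}+1$, and the argument above repeats verbatim to give $\hat{\mathbb{Z}}_\mathrm{cvx}^{\bar{j}+2} = \hat{\mathbb{Z}}_\mathrm{cvx}^{\bar{j}+1}$, and so on for every $j \geq \bar{j}$. The main obstacle, and the only nontrivial link in the chain, is ensuring that the robustly enforced predicted-state constraints translate into containment of the \emph{closed-loop} samples in $\hat{\mathcal{P}}_\mathrm{cvx}^{\bar{j}}$; once the containment $\hat{\mathbb{Z}}_\mathrm{cvx}^{\bar{j}} \subseteq \mathbb{Z}$ is in hand, the rest is an immediate convexity/bookkeeping argument.
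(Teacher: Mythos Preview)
Your proposal is correct and follows essentially the same line as the paper's proof: establish $\hat{\mathcal{P}}_\mathrm{cvx}^{\bar{j}} \subseteq \{x : H_x x \leq h_x\}$ from convexity and Assumption~\ref{assump:well_posed}, invoke recursive feasibility of the robust MPC (which the paper states without proof, citing standard results) to keep the closed-loop trajectory inside $\hat{\mathcal{P}}_\mathrm{cvx}^{\bar{j}}$, and then induct. Your write-up is in fact slightly more explicit than the paper's in spelling out why adjoining the new feasible states does not enlarge the convex hull.
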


\begin{proof}
Let the closed-loop realized states collected until the end of the $(\bar{j}-1)^\mathrm{th}$ iteration be
\begin{align}\label{eq:traj_clcvxBarj}
    \mathbf{x}^{1:\bar{j}-1} = [x^{1:\bar{j}-1}_0, x^{1:\bar{j}-1}_1, \dots, x^{1:\bar{j}-1}_T],
\end{align}
and let $\hat{\mathbf{x}}^{\bar{j}-1}$ be the collection of all trajectory points in \eqref{eq:traj_clcvxBarj} that satisfy the state constraints in \eqref{eq:constraints_nominal}. Following \eqref{eq:conv_hull_pol} we form $\hat{\mathcal{P}}_\mathrm{cvx}^{\bar{j}} = \{x: \hat{H}^{\bar{j}}_{x,\mathrm{cvx}} x \leq \hat{h}^{\bar{j}}_{x,\mathrm{cvx}}\}$ as,
\begin{align*}
    \hat{\mathcal{P}}_\mathrm{cvx}^{\bar{j}}=\mathrm{conv}([\mathbf{0}_{n \times 1}, \hat{\mathbf{x}}^{\bar{j}-1}]).
\end{align*}
By the convexity of the true unknown state constraints \eqref{eq:constraints_nominal} and Assumption~\ref{assump:well_posed}, we have that $\hat{\mathcal{P}}^{\bar{j}}_\mathrm{cvx} \subseteq \{x: H_x x \leq h_x\}$. This implies $\hat{\mathbb{Z}}^{\bar{j}} \subseteq \mathbb{Z}$. 

Furthermore, since \eqref{eq:MPC_R_fin} is feasible at time $t=0$ in iteration $\bar{j}$, \eqref{eq:MPC_R_fin} remains feasible with system \eqref{eq:unc_system} in closed-loop with the MPC controller \eqref{eq:cl_loop_diff} at all future times $t\leq (T-1)$.\footnote{This recursive feasibility property is stated without proof. Interested readers can look into the standard detailed proofs in \cite[Chapter~12]{borrelli2017predictive}.} It follows that $x^{\bar{j}}_t \in \hat{\mathcal{P}}^{\bar{j}}_\mathrm{cvx}$ for all $0 \leq t \leq T$, which implies $\hat{\mathcal{P}}^{\bar{j}+1}_\mathrm{cvx} = \hat{\mathcal{P}}^{\bar{j}}_\mathrm{cvx}$ from \eqref{eq:conv_hull_pol}. Extending this argument, we can similarly prove $\hat{\mathcal{P}}^{j}_\mathrm{cvx} = \hat{\mathcal{P}}^{\bar{j}}_\mathrm{cvx}$ for all $ j >\bar{j}$, which implies $\hat{\mathbb{Z}}_\mathrm{cvx}^{{j}} = \hat{\mathbb{Z}}_\mathrm{cvx}^{\bar{j}}$ for all $j >\bar{j}$. This completes the proof.     
\end{proof}
Proposition~\ref{prop:zhat_cvx} implies that if we find a $\bar{j}$ for which \eqref{eq:z_update_conv} yields feasibility of \eqref{eq:MPC_R_fin}, then the probability of Iteration Failure at iteration $j$ is exactly $0$ for all $j \geq \bar{j}$. Moreover, Proposition~\ref{prop:zhat_cvx} suggests that after the $\bar{j}^\mathrm{th}$ iteration, the constraint estimation update \eqref{eq:z_update_conv} can be terminated.
% in order to save redundant computations. 
% Although sets \eqref{eq:z_update_conv} may yield higher performance loss compared to \eqref{eq:zhat_svm}, this strategy still satisfies the design requirement (D1) and (D2) simultaneously from $j = \bar{j}$ onward. 

The update strategy \eqref{eq:z_update_conv} strictly ensures that $\hat{\mathbb{Z}}_\mathrm{cvx}^{j} \subseteq \mathbb{Z}$ for all $j \in \{2,3,\dots\}$, which is not necessarily true for sets obtained using the SVM method \eqref{eq:zhat_svm}. 
However, choosing this robust constraint estimation can increase the performance loss \eqref{eq:cl_loop_diff} over successful iterations after $j \geq \bar{j}$. This is the \emph{safety vs. performance trade-off}, which the user can manage with an appropriate choice of $\epsilon$. \textcolor{black}{Given any $\epsilon \in (0,1)$ or $\epsilon =0$, the chosen strategy lowers performance loss while satisfying (D1). Thus we satisfy (D2) with (D1).}
% Given an $\epsilon$, the sets $\hat{\mathbb{Z}}^{\bar{j}}$ constructed by both \eqref{eq:zhat_svm} and \eqref{eq:z_update_conv} guarantee \eqref{eq:math_def_IF}, while simultaneously yielding the lowest performance loss. Thus, we satisfy (D1) and (D2). 
% Set $\hat{\mathcal{P}}^{\bar{j}}$ may be scaled further to lower performance loss, but that may violate guarantees of \eqref{eq:zhat_cvx}, and thus (D1). 
\begin{remark}
Following Remark~\ref{rem:scale_svmset}, if the optimization problem \eqref{eq:MPC_R_fin} is infeasible or the terminal set $\hat{\mathcal{X}}_N^j$ constructed in \eqref{eq:term_set_DF} using the estimate $\hat{\mathbb{Z}}_\mathrm{cvx}^j$ is empty, one can switch to constraint estimates \eqref{eq:zhat_svm} and collect additional trajectory data, since 
% as the set \eqref{eq:conv_hull_pol} in such a scenario will satisfy
$\hat{\mathcal{P}}_\mathrm{cvx}^{j_1} \subseteq \hat{\mathcal{P}}_\mathrm{cvx}^{j_2}$, for any $2 \leq j_1 < j_2$. 
\end{remark}
% If infeasibility of \eqref{eq:MPC_R_fin} persists for some $j \geq N_\mathrm{it}$ iterations, then a probabilistic guarantee as per Proposition~\ref{prop:prob_con} may be provided as an alternative.  

%%%%%%%%%%%%%%%%%%%%%%%%%%%%%%%%%%%%

% One may alternatively keep updating $\hat{\mathbb{Z}}^j_\mathrm{svm}$ with each new iteration data, exploiting property \eqref{eq:contain_cvx}. Such an alternative strategy also highlights the safety vs performance trade off, as continually updated $\hat{\mathbb{Z}}^j_\mathrm{svm}$ sets potentially yield an improved closed-loop cost (defined in \eqref{eq:actual_cl_cost_sim}), and thus a lower performance loss (defined in \eqref{eq:cl_loop_diff}), but making online verification of safety (that is, Proposition~\ref{prop:prob_con}) more difficult. 

%%%%%%%%%%%%%%%%%%%%%%%%%%%%%%%%%%%

\subsection{The RMPC-ICL Algorithm}
We present our Robust MPC with Iterative Constraint Learning (RMPC-ICL) algorithm, which uses the estimated constraint sets $\hat{\mathbb{Z}}^j$ from Section~\ref{ssec:hatZ_constr} or Section~\ref{ssec:cvx_sec} while solving \eqref{eq:MPC_R_fin} in an iterative fashion. The algorithm terminates upon finding the smallest $\bar{j}$ such that \eqref{eq:math_def_IF} is satisfied. 
% (either with $\epsilon \in (0,1)$ or with $\epsilon=0$). 
\begin{algorithm}[h!]
    \caption{
    RMPC-ICL Algorithm 
    }
    \label{alg1}
    \begin{algorithmic}[1]
      \Statex \hspace{-1.2em}\textbf{Initialize:} $j = 1, l=0$, $\hat{\mathbb{Z}}_\mathrm{svm}^1 = \hat{\mathbb{Z}}_\mathrm{cvx}^1 = \hat{\mathbb{Z}}^1$ from \eqref{eq:zhat1} 
      
      \Statex \hspace{-1.2em}\textbf{Inputs:} $\mathbb{W}$, $\epsilon, \beta, N$ and $x_0^j = x_S$ for all $j \in \{1,2,\dots\}$ 
      
      \Statex \hspace{-1.2em}\textbf{Data:} $\tilde{\mathbf{x}}^{1} = [x_0^1, x^1_1, \dots, x_T^1]$, $\hat{\mathcal{P}}_\mathrm{cvx}^2$ formed with \eqref{eq:conv_hull_pol};  
       
      \vspace{1.2mm}
      \WHILE{$j \geq 2$}

      \IF{Points in $\tilde{\mathbf{x}}^{j-1}$ violate \eqref{eq:constraints_nominal}}
      
      \STATE Construct $\hat{\mathbb{Z}}_\mathrm{svm}^j$ with \eqref{eq:zhat_svm}; construct $\hat{\mathcal{X}}_N^j$ with \eqref{eq:term_set_DF};
      
      \ELSE
      
      $\hat{\mathbb{Z}}_\mathrm{svm}^j  = \hat{\mathbb{Z}}_\mathrm{svm}^{j-1}$;
      
      \STATE $l = l+1$; (if $l \geq \frac{\ln 1/\beta}{\ln 1/(1-\epsilon)}$, \textbf{break}; \eqref{eq:math_def_IF} is 
      
      satisfied)
      
      \ENDIF
      
      \IF{$\mathbb{P}([\mathrm{IF}]^{j}) = 0$ desired}
      
      \STATE Construct $\hat{\mathbb{Z}}_\mathrm{cvx}^j$ with \eqref{eq:z_update_conv}; construct $\hat{\mathcal{X}}_N^j$ with \eqref{eq:term_set_DF};

      \IF{Problem \eqref{eq:MPC_R_fin} is feasible with $\hat{\mathbb{Z}}_\mathrm{cvx}^j$}
      
      \STATE Use $\hat{\mathbb{Z}}^j = \hat{\mathbb{Z}}_\mathrm{cvx}^j$ for solving \eqref{eq:MPC_R_fin}; 
      
      \STATE \textbf{break}; ($\mathbb{P}([\mathrm{IF}]^{j}) = 0$ is satisfied)
      
      \ELSE 
      
      \STATE Use $\hat{\mathbb{Z}}^j = \hat{\mathbb{Z}}_\mathrm{svm}^j$ for solving \eqref{eq:MPC_R_fin};
      
      \ENDIF 
      
      \ELSE 
      
      \STATE Use $\hat{\mathbb{Z}}^j = \hat{\mathbb{Z}}_\mathrm{svm}^j$ for solving \eqref{eq:MPC_R_fin};
       
      \ENDIF      
      
      \vspace{0.2cm}     
      
      \STATE Set $\tilde{\mathbf{x}}^j = x_S, t = 0$; 
      
      \WHILE{$0 \leq t \leq T-1$}
     
      \STATE Solve \eqref{eq:MPC_R_fin} and apply MPC control \eqref{eq:inputCL_DF} to \eqref{eq:unc_system};
      
      \STATE Collect states and append $\mathbf{x}^j = [\mathbf{x}^j, x^j_{t+1}]$;  
      
      $t = t+1$
      
      \ENDWHILE
      
      $j = j+1$
      
      \ENDWHILE
      \end{algorithmic}
\end{algorithm}
% As described above, fallback strategies can be opted for if either of the methods in Section~\ref{ssec:hatZ_constr} and Section~\ref{ssec:cvx_sec} fail to converge before a tolerable number of iterations\footnote{If both methods fail to converge despite such fallback strategies, one may change initial condition $x_S$ and collect additional data.}. For simplicity of presentation, we assume in Algorithm~\ref{alg1} that such fallback measures are not necessary.   
%%%%%%%%%%%%%%%%%%%%%%%%%%%%%%
\section{Numerical Simulations}\label{sec:simul}
We verify the effectiveness of the proposed Algorithm~\ref{alg1} in a simulation example. The source codes are at \href{https://github.com/monimoyb/ConstraintLearning}{\texttt{https://github.com/monimoyb/ConstraintLearning}}.
% We present two numerical simulation examples in this section. The first for an LTI system which validates the theory of Algorithm~\ref{alg1} presented in the paper. The second example dealing with a switched linear system shows how our SVM based framework from Section~\ref{ssec:hatZ_constr} can be utilized for obtaining an efficient warm start for an iterative task in unknown environments \cite{vallon2019task}. 
% \subsection{LTI System}
We find approximate solutions to the following iterative optimal control problem in receding horizon:
\begin{equation}\label{eq:num_opt}
	\begin{array}{clll}
		\hspace{0cm} &V^{j,\star}(x_S)  = \\ [1ex]
% 		\hspace{-0.cm}    & V^{j,\star}(x_S)  =
		& \displaystyle\min_{u_0^{j},u_1^{j}(\cdot),\ldots}  \displaystyle\sum\limits_{t=0}^{T-1} 10\left \| \bar{x}_t^{j} - x_\mathrm{ref} \right\|^2_2 + 2 \left\| u_t^{j}(\bar{x}^j_t) \right\|^2_2  \\[1ex]
		& ~~~~\text{s.t.,}\\
		& ~~~~~~~~~~~~x_{t+1}^{j} = Ax_t^{j} + Bu_t^{j}(x_t^{j}) + w_t^{j},\\[1.5ex]
		& ~~~~~~~~~~~ \begin{bmatrix} H^\mathrm{b}_x \\ H^\mathrm{ub}_x \end{bmatrix} x_t^{j}
		\leq \begin{bmatrix}20 \times \mathbf{1}_{4 \times 1}\\ 5 \times \mathbf{1}_{2 \times 1} \end{bmatrix},\ \forall w_t^{j} \in \mathbb W,\\
		& ~~~~~~~~~~~ -30 \leq u^j_t(x_t^{j})
		\leq 30,\ \forall w_t^{j} \in \mathbb W,
		\\[.5ex]
		& ~~~~~~~~~~~~ x_0^{j} = x_S,\ t=0,1,\ldots,(T-1),
	\end{array}
\end{equation}
% For our simulations, we let the system complete an iteration despite State Constraint Failure in order to speedup convergence of $\hat{\mathbb{W}}^j$ to $\mathbb{W}$ (refer to Remark~\ref{rem:softCon}). 
where
\begin{align*}
    \mathbb{W} & = [-0.5,0.5] \times [-0.5,0.5],\\
    A &= \begin{bmatrix} 1 & 1 \\ 0 & 1 \end{bmatrix},~B = [0,1]^\top
\end{align*}
are known. The known and unknown parts of the state constraints are parametrized by the polytopes
$\{x: H_x^\mathrm{b} x \leq 0\}$ and $\{x: H_x^\mathrm{ub} x \leq 0\}$ respectively, where the matrices are given by
\begin{align*}
    H_x^\mathrm{b} = \begin{bmatrix} 1 & 0\\ 0 & 1\\ -1 & 0\\ 0 & -1 \end{bmatrix},~ H_x^\mathrm{ub}  = \begin{bmatrix} 1 & 1 \\ 1 & -1\end{bmatrix}.
\end{align*}
We solve the above optimization problem \eqref{eq:num_opt} with the initial state $x_S = [-15,15]^\top$ and reference point $x_\mathrm{ref} = [5,0]^\top$ for task duration $T=10~\text{steps}$ over all the iterations. Algorithm~\ref{alg1} is implemented with a control horizon of $N=4$, and the feedback gain $K$ in \eqref{eq:term_set_DF} is chosen to be the optimal LQR gain with parameters $Q_\mathrm{LQR}=10{I}_{2 \times 2}$ and $R_\mathrm{LQR} = 2$. The optimization problems are formulated with YALMIP interface \cite{lofberg:05} in MATLAB, and we use the Gurobi solver to solve a quadratic program at every time step for control synthesis. The goal of this section is to show:
\begin{itemize}
    \item Our approach finds an iteration index $\bar{j}$ such that \eqref{eq:math_def_IF} is guaranteed to hold for all iterations $j \geq \bar{j}$. 
    
    \item Performance loss $[\mathrm{PL}]^j$ over Successful Iterations (after $j \geq \bar{j}$) increases as the tolerable probability $\epsilon$ of Iteration Failure is lowered. This highlights the safety vs. performance trade-off.
\end{itemize}

\subsection{Bounding the Probability of Iteration Failure}\label{ssec:resA}
We demonstrate satisfaction of design specification (D1) by Algorithm~\ref{alg1}. First, we focus on the SVM-based approach. We choose an SVM classifier with a Radial Basis kernel function \cite[Chapter~12]{friedman2001elements}. For introducing exploration properties, the SVM classifier $f^0(x)$ was initially warm-started by exciting the system \eqref{eq:unc_system} with random inputs and collecting trajectory data for two trajectories. After that the control process was started by solving \eqref{eq:MPC_R_fin}. The polytopes $\mathcal{P}^{j+1}_\mathrm{svm}$ were generated by taking a convex hull of randomly generated 1000 test points before each iteration, which were classified as $f^j(x^j_\mathrm{test}) \leq 0$ for $j \in \{1,2,\dots\}$.  

We consider two cases of tolerable Iteration Failure, with respective probabilities of $30\%$ and $50\%$, corresponding to $\epsilon = 0.3$ and $\epsilon = 0.5$ (see Table I). The associated estimated constraint sets $\hat{\mathbb{Z}}^{\bar{j}}$ were obtained for $\bar{j} = 5$ and $\bar{j} = 3$ respectively \footnote{We note that the exact value of $\bar{j}$, as well as the associated estimated constraint sets, depend on the disturbance sequence. Running this example several times will yield similar results, but not the exact same results.}.
These sets satisfy design requirement (D1) and are shown in Fig.~\ref{fig:estim_con}. As expected, the constraint set estimated with $\epsilon = 0.5$ is larger than the set estimated with $\epsilon = 0.3$. Both estimated sets partially violate the true constraint set (outlined in black).

Furthermore, in order to verify the certificate obtained using Proposition~\ref{prop:prob_con}, we run $100$ \emph{offline} Monte-Carlo simulations (or \emph{trials}) of iterations by solving \eqref{eq:MPC_R_fin}, with $\hat{\mathbb{Z}}^{1:100} = \hat{\mathbb{Z}}^{\bar{j}}$, for each of the above $\hat{\mathbb{Z}}^{\bar{j}}$ sets, and estimate the actual resulting Iteration Failure probability. This probability is estimated by averaging over 100 Monte Carlo draws of disturbance samples $w_{0:T-1} = [w_0, w_1, \dots, w_{T-1}]$, i.e.,
\begin{align*}
    \mathbb{P}({x}_{0:T} \notin \hat{\mathbb{Z}}_s^{\bar{j}}) \approx \frac{1}{100} \sum_{\tilde{m}=1}^{100} (\mathbf{1}_\mathcal{F}(x_{0:T}))^{\star\tilde{m}},
\end{align*}
where 
$$
(\mathbf{1}_\mathcal{F}(x_{0:T}))^{\star\tilde{m}} = 
\begin{cases}
1,~\textnormal{if } {x_{0:T} \notin \hat{\mathbb{Z}}_s^{\bar{j}} \vert (w_{0:T-1})^{\star \tilde{m}}}, \\
0,~\textnormal{otherwise},\\
\end{cases}
$$
and $(\cdot)^{\star \tilde{m}}$ represents the $\tilde{m}^{\mathrm{th}}$ Monte Carlo sample\footnote{For brevity, with slight abuse of notation, we use $\hat{\mathbb{Z}}_s^{\bar{j}}$ to denote the corresponding state constraints.}. 
\begin{figure}[h]
    \centering
    \includegraphics[width=10cm]{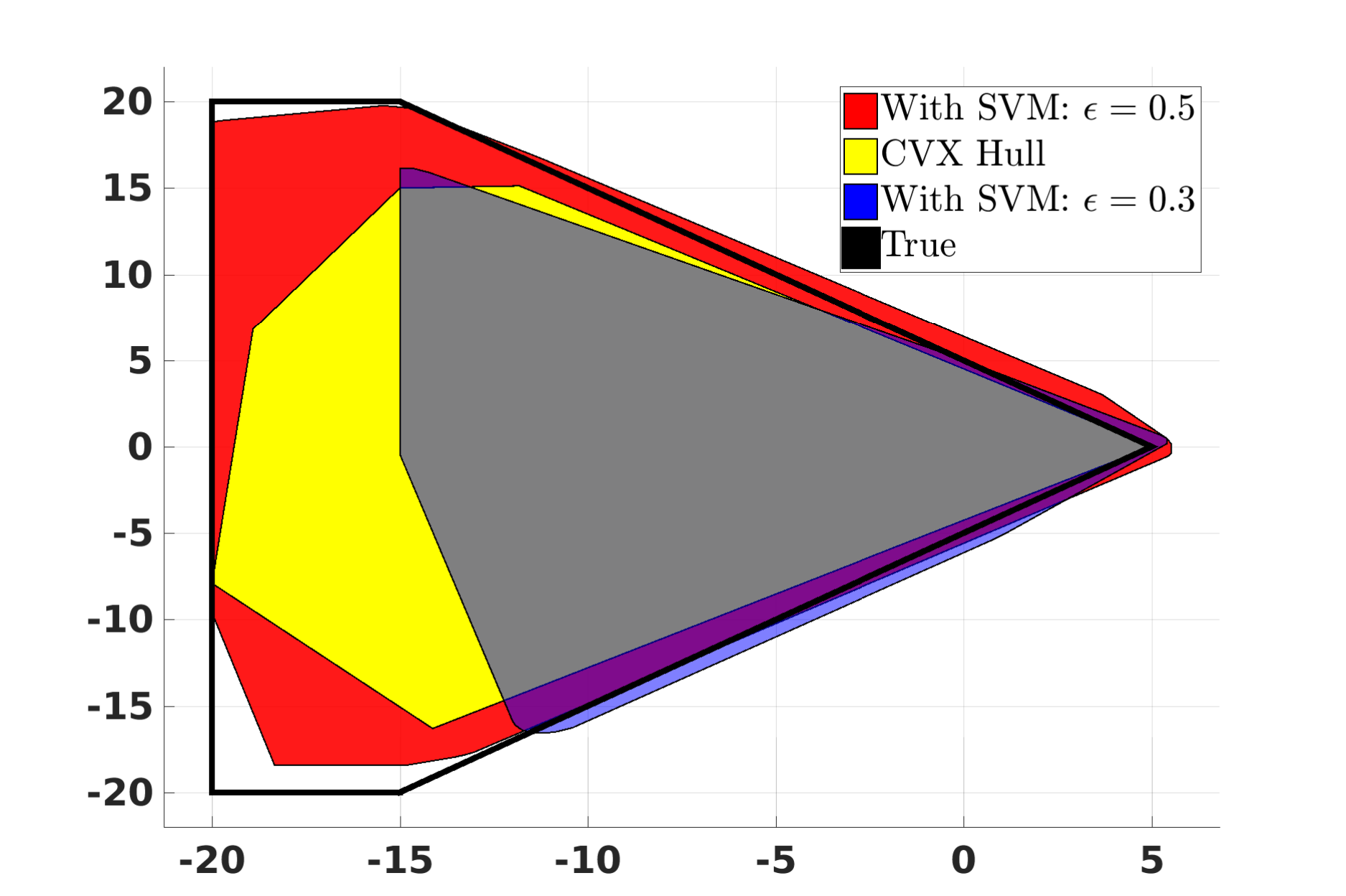}
    \caption{Estimated state constraint sets with varying bounds for $\mathbb{P}([\mathrm{IF}]^j)$.}
    \label{fig:estim_con}
\end{figure}

The values obtained were $\hat{\epsilon} \approx 0.01$ and $\hat{\epsilon} \approx 0.04$ for $\epsilon = 0.3$ and 
$\epsilon = 0.5$ respectively. Thus we see that, in practice, the actual probability of Iteration Failure is about $92\%-96 \%$ lower than the corresponding chosen $\epsilon$. This highlights the conservatism of the bounds given in Proposition~\ref{prop:prob_con}.

We next verify the satisfaction of design requirement (D1) when the estimated constraint sets are obtained using the robust convex hull based approach from Section~\ref{ssec:cvx_sec}. We use the same 100 draws of disturbance sequences $w^{1:100} = [w^1_{0:T-1}, w^2_{0:T-1}, \dots, w^{100}_{0:T-1}]$ as for the SVM-based approach above. The resulting constraint estimate set is shown in Fig.~\ref{fig:estim_con} and is obtained at $\bar{j} = 4$. Using this set in \eqref{eq:MPC_R_fin} ensures no Iteration Failures for all $j \geq \bar{j}$, as proven in Proposition~\ref{prop:zhat_cvx}. These results from Section~\ref{ssec:resA} are summarized in Table~I.

\subsection{Safety vs. Performance Trade-Off}\label{ssec:resB}
For the same Monte Carlo draws of $w^{1:100}$, we approximate the average closed-loop cost $\mathbb{E}[{\hat{\mathcal{V}}^{\bar{j}}}(x_S, w_{0:T})]$ by taking an empirical average over the $100$ Monte Carlo draws,
\begin{align*}
    \mathbb{E}[\hat{\mathcal{V}}^{\bar{j}}(x_S, w_{0:T-1})] \approx \frac{1}{100} \sum_{\tilde{m}=1}^{100} \mathcal{\hat{V}}^{\bar{j}}(x_S, (w_{0:T-1})^{\star \tilde{m}}),
\end{align*}
with $\hat{\mathbb{Z}}^{\bar{j}}$ sets obtained in Fig.~\ref{fig:estim_con}. The cost values are normalized by ${\mathcal{V}}^{\star}(x_S)$, which denotes the empirical average closed-loop cost if $\mathbb{Z}$ had been known. 

%The values are $0.93$, $0.97$ and $1.04$ for $\epsilon = 0.5$, $\epsilon = 0.3$ and the convex hull based method in Section~\ref{ssec:cvx_sec} respectively. 
The results are summarized in Table I.
We see that the average closed-loop cost shows an inverse relationship with the tolerable Iteration Failure probability $\epsilon$. For lower probabilities of $\mathrm{[IF]}^j$ with $\epsilon = 0.30$, we pay a $3\%$ lower average closed-loop cost compared to ${\mathcal{V}}^{\star}(x_S)$. Allowing for higher probability of $\mathrm{[PL]}^j$ with $\epsilon=0.50$ proves to be cost-efficient, where we pay around $7 \%$ lower average closed-loop cost compared to ${\mathcal{V}}^{\star}(x_S)$. The cost for the approach in Section~\ref{ssec:cvx_sec} is the highest, with a $4 \%$ higher average closed-loop cost compared to ${\mathcal{V}}^{\star}(x_S)$. This directly reflects the safety vs. performance trade-off. \textcolor{black}{This also shows that our approach lowers performance loss for any given $\epsilon$, satisfying (D2) with (D1).} 
%and highlights how choice of $\hat{\mathbb{Z}}^{\bar{j}}$ satisfies (D1) and (D2), for a specific value of $\epsilon$.
%%%%%%%%%%%%%%%%%%%%%%%%%%%%%
% \subsection{Switched Linear System: Task Warmstart}

\begin{table}[h]
\centering
\caption{}
\label{table:results}
\begin{tabular}{ ||c||c||c||c||} 
 \hline
 $\epsilon$ & $\bar{j}$ & $\hat{\epsilon}$ & $ {\mathbb{E}[\hat{{\mathcal{V}}}^{\bar{j}}(x_S, w_{0:T-1})]}/\mathcal{V}^\star(x_S) \approx$ \\
 \hline
 0 & 4 & 0 & 1.04 \\
 0.3 & 5 & 0.01 & 0.97 \\ 
 0.5 & 3 & 0.04 & 0.93 \\ 
 \hline
\end{tabular}
\end{table}
\vspace{-5pt}

%%%%%%%%%%%%%%%%%%%%%%%%%%%%%%%
\section{Conclusions}\label{sec: concl}
We propose a framework for an uncertain LTI system to iteratively learn to satisfy unknown polyhedral state constraints in the environment.
% The uncertainty in the system is assumed to be an additive disturbance with known bounds. 
From historical trajectory data, we construct an estimate of the true environment constraints before starting an iteration, which the MPC controller robustly satisfies at all times along the iteration. A \emph{safety} certification is then provided for the estimated constraints, if the true (and unknown) environment constraints are also satisfied by the controller in closed-loop. We further highlight a trade-off between safety and controller performance, demonstrating that a controller designed with estimated constraint sets which are deemed highly safe, result in a higher average closed-loop cost incurred across iterations. Finally, we demonstrated the efficacy of the proposed framework via a detailed numerical example. 
%%%%%%%%%%%%%%%%%%%%%%%%%%%
\section*{Acknowledgements}
% We thank Akhil Shetty for useful comments and discussions. 
This research was sustained in part by fellowship support from the National Physical Science Consortium and the National Institute of Standards and Technology. The research was also partially funded by Office of Naval Research grant ONR-N00014-18-1-2833.

% \balance 
\renewcommand{\baselinestretch}{.4}
%%%%%%%%%%%%%%%%%%%%%%%%%%%%%%%%%%%%
\bibliographystyle{IEEEtran} 
\bibliography{IEEEabrv,bibliography.bib}

\section*{Appendix}
\subsection*{Proof of Proposition~\ref{prop:prob_con}}
Recall matrices $H_x$ and $h_x$ defined in \eqref{eq:constraints_nominal}. Let us denote $\mathbf{H}_x = H_x \otimes {I}_T$ and $\mathbf{h}_x = h_x \otimes {I}_T$. Let $[\mathbf{H}_{x}]_i$ and $[\mathbf{h}_{x}]_i$ denote the $i^\mathrm{th}$ row of $\mathbf{H}_x$ and $\mathbf{h}_x$ respectively. For a fixed initial condition $x^{1:j}_0 = x_S$ and a random disturbance realization $\mathbf{w} = [w_0, w_1, \dots, w_{T-1}]$, consider the corresponding closed-loop trajectory $\mathbf{x}(\mathbf{w}) = [x^\top_0, x^\top_1, \dots, x^\top_T]^\top$. Now consider the following function
\begin{align*}
    Q(\mathbf{w}) :=  \max_i\{ [\mathbf{H}_x]_i\mathbf{x}(\mathbf{w}) - \mathbf{h}_i   \},
\end{align*}
and then define $\hat Q_{N_\mathrm{it}} := \max_{j=1,2,\dots,N_\mathrm{it}}\{ Q(\mathbf{w}^j)\}$, where $\mathbf{w}^{j}$ for $j \in \{1,2,\dots, N_\mathrm{it}\}$ are a collection of independent samples of $\mathbf{w}$ drawn according to $\mathbb{P}$. It follows \cite[Theorem 3.1]{TempoBaiDebbene} that, if $N_\mathrm{it} \geq \frac{\ln1/\beta}{\ln1/(1-\epsilon)}$, then 
\begin{equation*}
    \mathbb{P}^{N_\mathrm{it}} \big[ \mathbb{P}[Q(\mathbf{w}) > \hat{Q}_{N_\mathrm{it}}] \leq \epsilon \big] \geq 1-\beta.
\end{equation*}
Proposition~\ref{prop:prob_con} now follows upon setting $\hat Q_{N_\mathrm{it}}=0$. 
\end{document}